\Crefname{equation}{Equation}{Equations.}
\Crefname{figure}{Figure}{Figures}
\begin{document}
\title{Quantum LDPC Codes for Modular Architectures}

\author{Armands Strikis}
\email{armands.strikis@mansfield.ox.ac.uk}
\affiliation{Department of Materials, University of Oxford, Oxford OX1 3PH, United Kingdom}
\affiliation{Quantum Motion, 9 Sterling Way, London N7 9HJ, United Kingdom}

\author{Lucas Berent}
\email{lucas.berent@tum.de}
\affiliation{Chair for Design Automation, Technical University of Munich, Germany}

\begin{abstract}
    In efforts to scale the size of quantum computers, modularity plays a central role across most quantum computing technologies. In the light of fault tolerance, this necessitates designing quantum error-correcting codes that are compatible with the connectivity arising from the architectural layouts. 
    In this paper, we aim to bridge this gap by giving a novel way to view and construct quantum LDPC codes \emph{tailored} for modular architectures.
    We demonstrate that if the intra- and inter-modular qubit connectivity can be viewed as corresponding to some classical or quantum LDPC codes, then their hypergraph product code fully respects the architectural connectivity constraints.
    Finally, we show that relaxed connectivity constraints that allow \emph{twists} of connections between modules pave a way to construct codes with better parameters. 
\end{abstract}

\maketitle

\section{Introduction}
In classical computing it has become standard to design architectures that divide the necessary processing power into smaller components instead of only increasing the power of a single system~\cite{buyya1999high,singh2015survey}. A similar trend can be observed in recent proposals around scaling quantum computation. A multitude of quantum computing platforms have natural limitations, e.g., on how many qubits may be contained within a single ion trap or a superconducting chip, whereas each instance of such a platform is referred to as \emph{a module}~\cite{Monroe2014, Bartolucci2021, gold2021, hucul2015, Buonacorsi_2019}. Scaling up existing systems is the main hurdle in current research. 
Therefore, \emph{modular architectures} that consist of many similar modules will likely be necessary~\cite{bravyi2022future, brown2022modular,bombin2021interleaving,ramette2023fault,niu2023low,bartolucci2023fusion}.

To execute large scale quantum algorithms, fault-tolerant quantum computation (FTQC) is essential~\cite{Preskill1997}. A crucial component of FTQC is the error-correcting code, which describes how to encode quantum information in a redundant way with the goal of lowering the error rates of computation~\cite{lidar_quantum_2013, Terhal2015}. Recent results have shown the existence of quantum low-density parity-check (QLDPC) codes with asymptotically good parameters~\cite{panteleev2022asymptotically,dinur2022good,DBLP:journals/tit/BreuckmannE21, leverrier2022quantum, lin2022good}. This is a strong indication that QLDPC codes may play a key role in lowering the qubit overhead necessary for FTQC. It is known that well performing QLDPC codes require ``long-range" qubit connectivity if it is desired to embed the system in some finite dimensional Euclidean space~\cite{tremblay2022constant}. In fact, the asymptotic scaling of code parameters is upper bounded by the scaling of long-range qubit connectivity~\cite{baspin2022connectivity,Baspin2022quantifying}. 

When specifically considering the practical setting of modular architectures of a quantum computer (with a finite number of qubits), we may expect that some degree of long range interactions that scale with the code size is physically feasible~\cite{Gambetta2020Rydberg, Landsman2019ion, HAFFNER2008,Choe2022kam}. These can be long-range interactions within each module or between the modules themselves. 
Because of this less constrained connectivity, the question whether it is useful and practical to favour QLDPC codes of finite size over the surface code---the current gold standard for many quantum computing platforms~\cite{Dennis2002Topo, KITAEV2003,bravyi1998quantum,krinner2022realizing, andersen2020repeated,marques2022logical,chen2021exponential,acharya2022suppressing}---is important. To answer this question, a multitude of aspects of FTQC need to be considered, such as the implementation of fault-tolerant logic, decoding, and the code performance. Most of these questions are still open for QLDPC codes. Previous works have mentioned the compatibility of QLDPC codes and modular architectures, but without providing the exact details for the code construction or the partition of the qubits into modules~\cite{campbell2017roads}. Alternatively, past works have in detail described the way to use a surface code for modular architectures~\cite{nickerson2013}, but do not consider general QLDPC codes.

In this paper, we explore aspects around fault-tolerance for modular architectures with a focus on code constructions.
In a formal and general way, we show how QLDPC codes \emph{tailored} to modular architecture connectivity constraints can be constructed.
This gives a correspondence between product constructions of QLDPC codes and modular architectures by viewing the intra- and inter-modular connectivities as Tanner graphs or equivalently as chain complexes of classical or quantum LDPC codes. First, we give a formal perspective on the recently introduced \emph{looped pipeline architecture}~\cite{cai2022looped}. We prove that it can be slightly extended to produce a 3D surface code. This immediately shows a valuable contribution of our formalism to practically highly relevant work around modular architectures. We then extend the construction to more general hypergraph product codes.
Finally, we show that a broader class of product codes with potentially better code parameters can be constructed for architectures that allow \emph{twisted} modular connectivity.
With this work, we take a further step towards closing the gap between physical ``low-level'' system architecture questions and recent theoretical breakthroughs around asymptotically good quantum codes~\cite{breuckmann2021quantum, panteleev2022asymptotically, leverrier2022quantum, dinur2022good,lin2022good}. Furthermore, we want to emphasize the need for investigations around practical applications of general QLDPC codes~\cite{panteleev2021degenerate}.

The rest of this work is structured as follows. First, modular architectures and a formal viewpoint is given in~\cref{sec:modules}. Notation and fundamental background is presented in~\cref{sec:chainC}. The looped pipeline architecture, the 3D surface code construction, and our firsts main theorem is discussed in~\cref{sec:looped-arch}. Stepwise generalisations are subsequently given in~\cref{sec:hgp} where first the intra-modular connectivity is generalised, then \cref{sec:beyond} where the inter-modular connectivity is generalised, and~\cref{sec:bp} where the allowed connections between the modules are generalised. Finally, we conclude with a short discussion and outlook in~\cref{sec:conclusion}.

\section{Modular Architectures}\label{sec:modules}
\begin{figure}
    \centering
    \includegraphics[width=\linewidth]{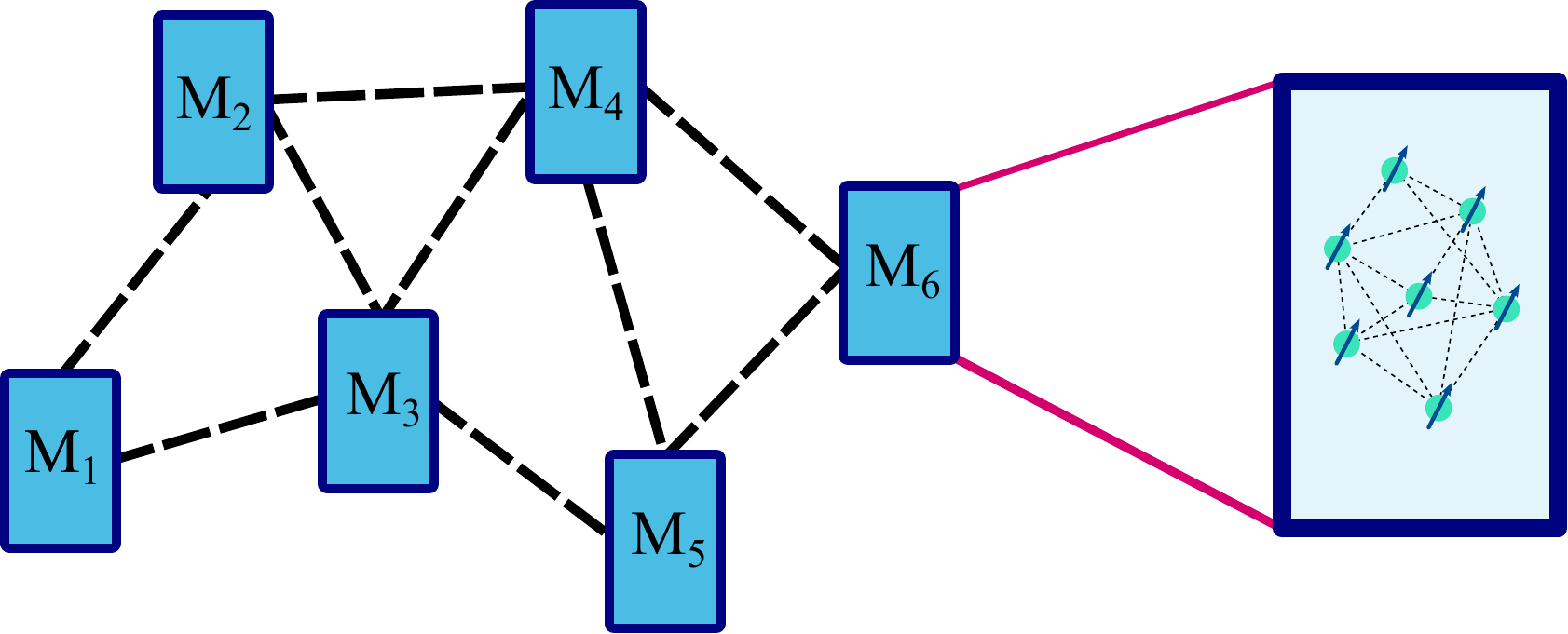}
    \caption{Here, quantum computing architectures where modules $\{M_k\}$ have a defined sparse inter-modular connectivity are considered. Each module contains a finite and equal number of qubits with the same connectivity constraints.}
    \label{ModArchitecture}
\end{figure}

In this paper we consider various qubit connectivity constraints that may arise in a quantum computer based on a modular architecture. Taking the constraints into account, we provide a novel recipe on how to construct quantum LDPC codes that respect a certain modular architecture.

Let us first concretely define what we mean by a \emph{modular architecture}. Consider a quantum computer with a (finite) collection of qubits $\{q_N\}$. In a modular architecture, each qubit is assigned to one and only one module, where the (finite) collection of such modules is given as $\{M_k\}$, see~\cref{ModArchitecture}. We assume that the modules are equivalent copies of each other. Therefore, we can partition the collection of qubits into disjoint sets $\{q_i\}_k$ such that $\{q_N\} = \bigcup_k \{q_i\}_k$, and we define that each module contains a finite and equal number of qubits $n$, i.e. $|\{q_i\}_k| = n$ for all $k$. To simplify the notation, we use the same canonically ordered index set for each module and hence drop the subscript $k$ altogether.
With this in mind, we can define the intra-modular qubit connectivity in the usual sense as follows.

\begin{definition}
\label{def:qubits}
A qubit $q_i \in M_k$ is \textit{connected} to a qubit $q_j \in M_k$ if the architecture allows us to directly implement two-qubit entangling operations between these qubits for all $k$. 
\end{definition}
\noindent Generally, we consider entangling gates such as controlled-not (CNOT) which are required for most syndrome circuits. However, entangling operations using measurements, e.g., in using photonic links, or otherwise are just as valid. We only require that these gates allow the construction of a syndrome extraction circuit required for quantum error correction.

In this work, we investigate cases where the qubit intra-modular interactions are given by a connectivity graph which can be viewed as a 
\emph{Tanner graph} of some classical or quantum code.

A (quantum) Tanner graph is a graph which has edges between nodes representing parity checks and data (qu)bits if and only if the (qu)bit is in the support of the parity check. See~\cref{TannerGraph} for a Tanner graph of a $7$ qubit Steane code and~\cref{sec:chainC} for a more technical introduction to Tanner graphs. As an example, nearest neighbour connectivity for a 1D chain of qubits corresponds to a Tanner graph of a classical repetition code.

In a modular architecture, each module may be connected to some number of other modules in a specific way.
We define the inter-modular connectivity as follows.

\begin{figure}
    \centering
    \includegraphics[width=0.9 \linewidth]{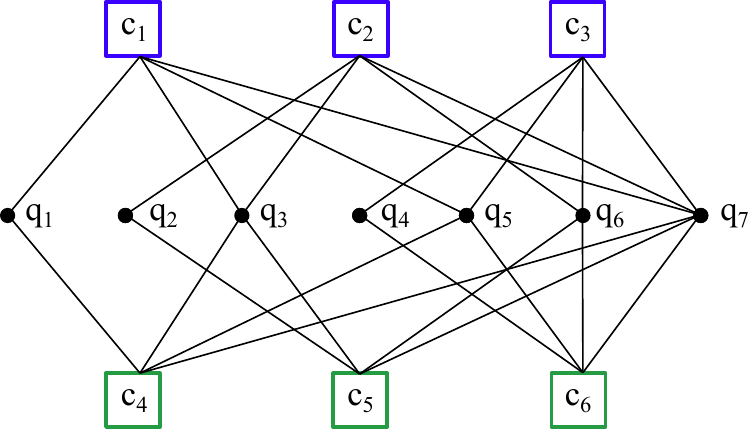}
    \caption{Tripartite quantum Tanner graph of a $7$ qubit Steane code. $\{q_i\}$ denote the data qubits, while $\{c_j\}$ denote the X parity checks (blue) and Z parity checks (green). Edges are drawn between them iff the data qubit is in the support of the check.}
    \label{TannerGraph}
\end{figure}

\begin{definition}
\label{def:modules}
A module $M_k$ is \textit{connected} to a module $M_j$ if the architecture allows us to directly implement two-qubit entangling operations between a qubit $q_i \in M_k$ and its \textit{respective} qubit $q_i \in M_j$ for all $i$.

\end{definition}
\noindent In~\cref{sec:bp-chain-compl}, this requirement will be slightly generalised to allow for \textit{twists} of the connections between modules to construct better quantum codes~\cref{def:modules2}.
Similarly to the intra-modular connectivity case, we choose the graph defining the inter-modular connectivity to correspond to a Tanner graph of a potentially different quantum or classical code. Finally, we say that the code \emph{respects} the connectivity constraints if we can associate a physical qubit in our quantum system to every parity check and data qubit of the code such that the parity check qubits are connected to the data qubits in their support.

In the following, we describe a way to create new codes that respect the overall architectural connectivity constraints as defined above. We only require that the intra- and inter-modular connectivities are formulated as Tanner graphs of some codes. To keep this formulation as general as possible, we employ the language of chain complexes.

\section{Preliminaries}\label{sec:chainC}
In this section, we discuss quantum codes and how to view them in terms of a $(\F_2)$ homological perspective~\cite{bombin2006topological, bravyi1998quantum, freedman2002z2, bravyi1998quantum, breuckmann2021quantum}. 

\subsection{Classical and Quantum Codes}
Since quantum LDPC codes have an interesting correspondence to classical codes, let us briefly discuss their classical analogue: binary linear codes. 

A classical binary linear $[n,k]$-code is a subspace $\calC \subseteq \F_{2}^{n}$. The set of codewords is called the \emph{codespace} and corresponds to the $k$-dimensional subspace $\ker H$ of a binary matrix $H$ called the \emph{parity check matrix} (pcm):

\[\calC = \set{x \in \F_2^n \mid Hx = 0}.\]

\noindent It is useful to describe code $\calC$ with its \emph{Tanner graph}. This is a bipartite graph $\calT(\calC)$ whose adjacency matrix is $H$.

Since stabilizer codes~\cite{gottesman1997stabilizer} play a central role in QEC, let us recall some fundamental definitions. 
We consider an $n$-qubit Hilbert space $(\C^2)^{\otimes n} = \C^{2^{n}}$. %
Let $\calP_n$ denote the (non-abelian) group of $n$ qubit Pauli operators defined as 

\begin{equation*}
\calP_n = \angles{i, X_j, Z_j \mid j\in [n]} =
\biggl\{ \phi \bigotimes_{j=0}^n P_j\biggr\},	
\end{equation*}	
where $\phi\in \set{\pm 1, \pm i}$ and $P_j$ is a single qubit Pauli operator $P_j\in\set{I,X,Y,Z}$. The weight $wt(P)$ of a Pauli operator $P$ is the number of non-identity components in the tensor product representation of $P$. A \emph{stabilizer group} $\calS$ is an abelian subgroup of $\calP_n$ s.t. $-I \notin \calS$. Elements of $\calS$ are called \emph{stabilizers} and the group is generated by $m$ independent \emph{stabilizer generators} $\calS = \angles{S_1, \dots, S_m}$.

The main idea of the \emph{stabilizer codes} is to use a common $+1$ eigenspace of all elements of a stabilizer group $\calS \subset \calP_n$ as the code space of a code $\calC$.
Therefore, an $\llbracket n,k,d \rrbracket$-quantum stabilizer code $\calC$ is a $2^k$-dimensional subspace of $(\C^2)^{\otimes n}$. Parameter $d$ denotes the \emph{minimal distance} of $\calC$, given by the minimal weight of a Pauli operator that commutes with all stabilizers $S_i$ but is not in the stabilizer group $S$.

Each $n$ qubit Pauli operator can be written as a binary vector. More formally, the quotient group $\calP_n/\set{\pm I^{\otimes n}, \pm iI^{\otimes n}}$ is isomorphic (up to phases) to $\F_2^{2n}$ by the isomorphism that sends an $n$ qubit Pauli operator corresponding to a tensor product of $X$ and $Z$ Paulis to a binary vector representation $(x|z) \in \F_2^{2n}$.
Thus $P,Q \in \calP_n$ commute iff for their binary representations $P \cong (x|z), Q\cong (x'|z')$ it holds that.

\begin{equation}
 \angles{x,z'} + \angles{z, x'} = 0\label{eq:pauli-innerprod-commute}
\end{equation}
This representation can naturally also be applied to the $m$ stabilizer generators $S_1, \dots S_m$ of a code $\calC$, which yields a $m \times 2n$ matrix $H=(H_X \mid H_Z)$. Each row of $H$ corresponds the binary representation of a stabilizer generator $S_i$. As for classical codes, matrix $H$ is the \emph{parity-check matrix} of $\calC$. By~\cref{eq:pauli-innerprod-commute}, $\ker H$ is exactly the set of vectors $(z|x) \in \F_2^{2n}$ s.t. their reordered form $(x|z)$ is the binary representation of a Pauli operator that commutes with all stabilizer generators $S_1, \dots S_m$.

An important subclass of stabilizer codes are CSS codes, which is considered in this work if not stated otherwise. These are stabilizer codes where all non-identity components of stabilizer generators are either all $X$ or all $Z$. Hence, the commutativity relation (\cref{eq:pauli-innerprod-commute}) can be written as 

\begin{equation}
	H_XH_Z^T=0. \label{eq:css-commutativity}
\end{equation}
Or equivalently, $C_Z^{\perp} \subseteq C_X$. Since the rows of a pcm correspond to the \emph{checks} of the code, elements of $H_X$ and $H_Z$ are called $X$ and $Z$ checks, respectively.
A CSS code is called \emph{low-density parity check} (LDPC) code if all checks have constant weight and each qubit is involved in a constant number of checks, i.e., if the parity check matrix (matrices) are sparse.

Let us now introduce an alternative perspective on codes that was essential in recent results around asymptotically good quantum and locally testable classical codes~\cite{DBLP:journals/tit/BreuckmannE21, panteleev2022asymptotically, leverrier2022quantum, dinur2022good, lin2022good}, and has become standard. 

\subsection{Chain Complexes}\label{sec:chain-complexes}
A \emph{chain complex} of vector spaces is a collection of vector spaces $\{C_i\}$ together with linear maps $\bdry_i$
\[ \bdry_{i}: C_i \to C_{i-1},\]

\noindent with the condition that squared boundary maps vanish, i.e.,
 
\begin{equation}
 \bdry_{i} \bdry_{i+1} = 0. \label{eq:boundary-sq}
\end{equation}

\noindent \cref{eq:boundary-sq} is equivalent to requiring that $\text{im } \bdry_{i+1} \subseteq \text{ker } \bdry_i $. Elements in $C_i$ are called \emph{i-chains} and 

\begin{align}
	& Z_i(C) = \text{ker } \bdry_i \subset C_i \\
	& B_i(C) = \text{im } \bdry_{i+1} \subset C_{i} \\
	& H_i(C) = Z_i(C)/B_i(C)
\end{align}
are the $i$-\emph{cycles}, $i$-\emph{boundaries} and the $i$-th \emph{homology} of the complex $C$ respectively. For instance, when considering chain complexes arising from simplicial complexes, $2$-chains correspond to formal linear combinations of faces, $1$-chains to formal sums of edges and $0$-chains to formal sums of vertices. Intuitively, $1$-cycles are loops that start and end in the same vertex and boundaries are those cycles that are a boundary of a set of faces.

A classical binary linear code $\calC$ can be viewed as a two-term chain complex: 

\begin{equation}
	C = C_{1} \xrightarrow[]{\partial_{1}} C_{0},
\end{equation}
where $C_1 = \F^n_2$ and $\partial_i = H$ is the parity check matrix. Then the code $\calC{}$ is the space of $1$-cycles: 
\[\calC = Z_1(C) = \text{ker } \bdry_1, \]
and the space of $0$-chains is the space of checks acting on $\calC{}$. Note that $H_0(C) = 0$ if the checks are linearly independent. For classical codes, this representation does not yield any new insights and hence is rarely used. However, a quantum CSS code necessitates commutation relations between the bit-flip and phase-flip parity check matrices $H_X$ and $H_Z$ (\cref{eq:css-commutativity}). Thus, there is a bijection between CSS codes and chain complexes: A CSS code corresponds to a three term chain complex:

\begin{equation}
	C = C_{i+1} \xrightarrow[]{\partial_{i+1}} C_{i} \xrightarrow[]{\partial_{i}} C_{i-1},
\end{equation}

\noindent where $\partial_{i+1} = H_Z^T$ and $\partial_{i} = H_X$. Thus, qubits are associated with $1$-chains and $X$ and $Z$ checks with $0$-and $2$-chains, respectively. A prototypical example is a toric code, where $C_{2}$, $C_1$ and $C_{0}$ are vector spaces of \textit{faces}, \textit{edges} and \textit{vertices}, obtained from a square cellulation of a torus. Conversely, given an arbitrary chain complex
\begin{equation}
	... \xrightarrow[]{} C_{i+1} \xrightarrow[]{\partial_{i+1}} C_{i} \xrightarrow[]{\partial_{i}} C_{i-1} \xrightarrow[]{} ...
\end{equation}
we can pick a dimension $i$ to associate the space of qubits with and view the corresponding three term chain complex as a CSS code.
The code parameters are $n = \text{dim } C_i$, $k = \text{dim } H_i$ and $d$ is the minimum weight of a non-trivial representative of $H_i$.

To a three-term chain complex we can associate a quantum Tanner graph by considering a simple mapping. A quantum Tanner graph $G$ is a tripartite graph with a vertex set $V = P_Z \sqcup Q \sqcup P_X$, where the vertices $Q$ are associated with qubits and vertices $P_{X(Z)}$ are associated with $X(Z)$ parity checks. Then, for an arbitrary three-term chain complex $C$, we define a one-to-one mapping where the basis elements of the vector space $C_{i+1}$ are mapped to vertices in $P_Z$, the basis elements of $C_{i}$ to $Q$, and the basis elements of $C_{i-1}$ to $P_X$. Finally, the edges between vertex partitions $P_Z$ and $Q$ are given by $\partial_{i+1}$ such that an edge exist between vertices $p\in P_Z$ and $q\in Q$ if and only if $\overline q\in C_{i}$, which is the corresponding basis element of $q$, is in the span of $\partial_{i+1}\overline p$, where $\overline p\in C_{i+1}$ and corresponds to the vertex $p$. Edges between $P_X$ and $Q$ are defined in an analogous manner using $\partial_i$. Note that there are no edges between any two vertices within the same partition or between partitions $P_X$ and $P_Z$.
Graphically, if the chain complex is drawn as an object with faces, edges and vertices, then each individual face and edge is replaced by a vertex. The partition of vertices is naturally induced from this mapping and edges exist only between those pairs of vertices that correspond to adjacent faces and edges (or vertices and edges) in the original chain complex.

Conceptually the same mapping to a graph can be applied to an arbitrary length chain complex. For example, a two-term chain complex would be mapped to a classical Tanner graph. Because of this bijection we will refer to chain complexes, their boundaries or their corresponding Tanner graphs (sometimes called connectivity graphs) interchangeably for the rest of the paper.

\begin{figure*}
    \centering
    \includegraphics[width=\linewidth]{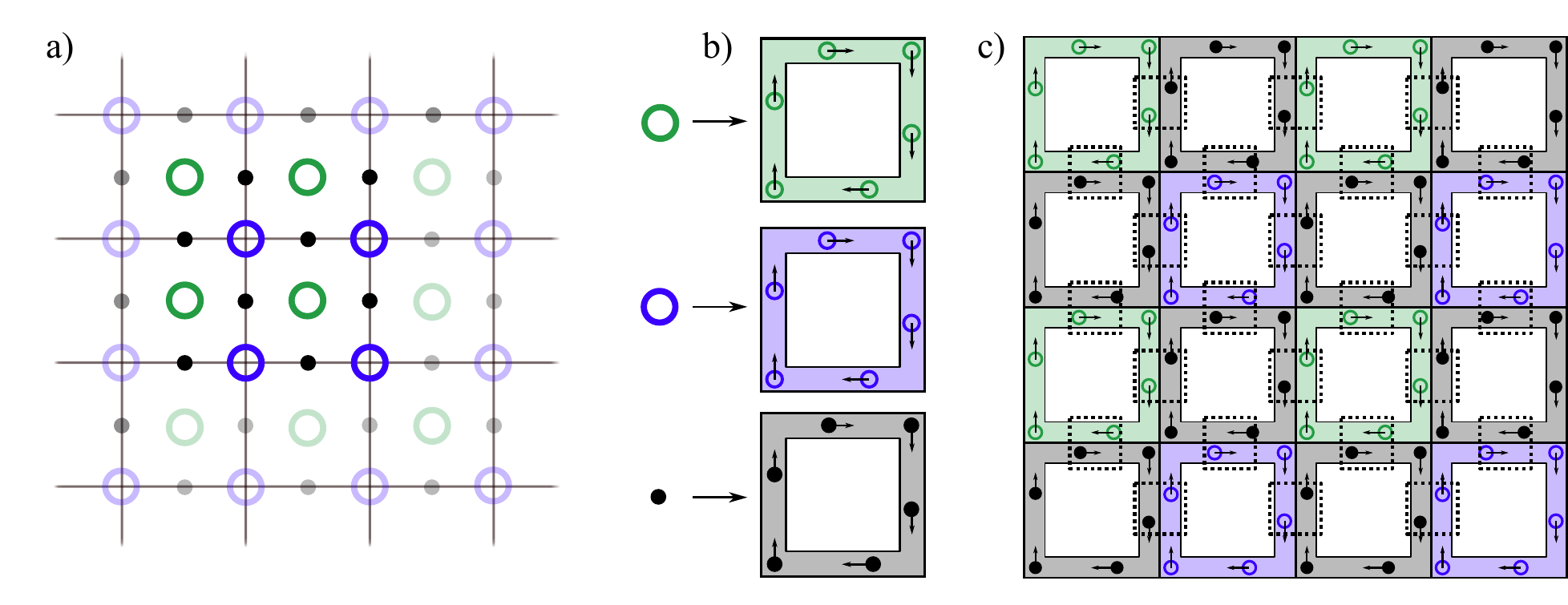}
    \caption{A stack of 2D surface codes can be generated by replacing every data qubit (black), Z parity check qubit (green) and X parity check qubit (blue) of the regular surface code (a) by a loop of corresponding qubits (b). The qubits communicate (entangle) with each other once they enter the dashed regions depicted in (c).}
    \label{Looped2D}
\end{figure*}

\section{Looped Pipeline Architecture}\label{sec:looped-arch}
To form a better intuition about our construction of codes for modular architectures, we first recap basic ideas of a recent result around QEC on a looped pipeline architecture~\cite{cai2022looped}. This constitutes the basis from which we build more involved and better quantum codes as we soften the constraints of the intra- and inter-modular connectivity and generalise the construction.

The work by Z. Cai and others considers a surface code layout (qubits are put on edges, $Z$-checks on faces and $X$-checks on vertices) as depicted in~\cref{Looped2D}(a) of a quantum chip where every data and ancilla qubit is replaced by a rectangular loop of fixed number of qubits as sketched in~\cref{Looped2D}(b). All types of qubits are moved along the loop in the clockwise direction at the same frequency. Once the qubits approach another qubit from a different loop, they interact to become entangled in a way that corresponds to the syndrome extraction circuit as illustrated in~\cref{Looped2D}(c). After the ancilla qubits have gone around the full loop, they are measured to read out the syndrome. The authors showed that this forms a stack of 2D surface codes, where the stack size is given by the number of qubits within each loop.

In their work, the authors and independently M. Fogarty~\cite{privcorrfogarty} identified that the stack of 2D surface codes may be used to generate a 3D surface code, but did not provide an explicit construction.
In the rest of the section we show how the looped pipeline architecture can be extended to implement a single 3D surface code (explained below) by assuming an additional connectivity within each qubit loop. Finally, we formalise and generalise this construction to the setting of modular architectures. This will allow us to construct quantum LDPC codes for more general connectivity constraints.

\subsection{3D Surface Code}
\label{sec:3Dsurface}

Consider a single data qubit loop from the looped pipeline architecture described above. Each qubit in this loop is part of a separate 2D surface code. If we extend the connectivity between the nearby qubits within each loop then we obtain a stack of surface codes linked together into a single block. This construction does not immediately produce a 3D surface code. For example, it links ancilla qubits to other ancilla qubits within the same loop.

\begin{figure}
    \centering
    \includegraphics[width=\linewidth]{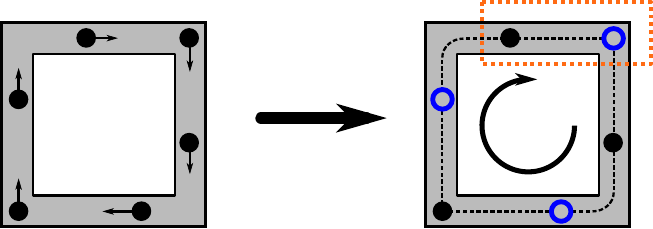}
    \caption{We replace all loops with a loop that has both ancilla (blue) and data (black) qubits. Additionally, we allow for nearby qubits in the loop to communicate. For example, the nearby qubits entangle whenever both of them enter the orange dashed box.}
    \label{RepetitionLoop}
\end{figure}

Instead, to produce a valid 3D surface code we additionally need to re-identify the qubits within each loop. In this regard, we form three different types of loops---face, edge and vertex loops---where the naming will parallel the chain complexes. They are laid out in a similar pattern as previously with face loops replacing the $Z$ ancilla loops, edge loops replacing the data qubit loops and vertex loops replacing the $X$ ancilla loops. Each of these loops may contain both data and ancilla qubits of the 3D surface code, hence, they also must contain measurement devices as described in~\cite{cai2022looped} to extract the syndrome~\cref{RepetitionLoop}. We assume that the total number of qubits per loop is even. Then, depending on the type of the loop and position within the loop each qubit can be given an assignment. These assignments can be found in~\cref{Tab1}. The even/odd parity of the qubit corresponds to its index $i$ within the loop, where the indexing is such that any qubit $q_i$ of any loop gets to interact with qubits $q_i$ of the neighbouring loops.
Note that we can exclude the even qubits in the face loop as they have no assignment.

\begin{table}
\begin{tabular}{|m{2.5cm} | m{2cm} m{2cm} |} 
 \hline
 Loop label & Odd qubits & Even qubits \\ [0.5ex] 
 \hline
 Vertex & X-stab ($6$) & Data \\ 
 
 Edge & Data & Z-stab ($4$)  \\
 
 Face & Z-stab ($4$) & --- \\ 
 \hline
\end{tabular}
\caption{Qubit assignments of different loops to generate a 3D surface code. Even/Odd assignment of qubits indicate their position in the qubit chain within the loop. Numbers in the parenthesis indicate the weight of the stabiliser.}
\label{Tab1}
\end{table}

By viewing the modular architecture in terms of chain complexes that describe codes, we can prove that such an assignment indeed produces a 3D surface code. In order to do so, we use tensor products of chain complexes.

\begin{remark}
The presented construction of a 3D surface code by shuttling the qubits around in loops will not have a threshold in general. Since the entangling and measurement operations are done on a one by one basis, the time it takes to extract syndrome is proportional to the number of qubits within each loop. A slightly altered scheme was proposed by~\cite{cai2022looped}, where the qubits in adjacent loops circulate in opposite directions and all qubits on the same side of the loop are being entangled at the same time with their respective qubits in other loops. If this was supplied with a number of measurement devices that scale in proportion with the number of qubits in the loop, the whole syndrome extraction would require $O(1)$ operations (e.g., one operation constitutes moving qubits to a new side of the loop). However, from a physical perspective one could argue that, in general, as the number of qubits per loop increases, so does the size of the loop that is needed. Therefore, the physical time it takes for a qubit to be shuttled at a finite speed to a new side of the loop (to perform a single operation) scales with the number of qubits per loop and, hence, the syndrome extraction time scales with the total number of qubits. In turn, errors accumulate faster than they can be corrected and the threshold for the code would not exist. This might be a general consequence for any non-concatenated quantum code for which the shuttling is used to do long range entangling gates locally (the same conclusion was reached by numerical analysis for a conceptually similar setup in~\cite{delfosse2021bounds}). Some ideas, like the ones presented in a recent work on hierarchical memories~\cite{pattison2023hierarchical}, may be used to overcome this challenge. Note that the aforementioned drawback only applies to this specific scheme based on shuttling. The main ideas in the rest of the paper do not share the same consequences.
\end{remark}

\subsection{Tensor Product of Chain Complexes}\label{sec:tensor-prod-chain-complexes}
Quantum codes can be constructed from products of chain complexes that describe other codes~\cite{tillich2013quantum, zeng2019higher}.
Let us discuss the construction in the following. The \emph{double complex} $C \boxtimes D$ is defined as

\begin{equation}\label{eqn:doubleComplex}
	(C \boxtimes D)_{p,q} = C_p \otimes D_q.
\end{equation}

	\begin{figure}	
	\begin{tikzcd}[cells={nodes={minimum height=2em}}]
		C_1 \otimes D_1 \arrow[r, "id^C \otimes \bdry^D_1"] \arrow[d, "\bdry^C_1 \otimes id^D"] & C_{1} \otimes D_{0} \arrow[d, "\bdry^C_1 \otimes id^D"]  \\
		C_0 \otimes D_1 \arrow[r, "id^C \otimes \bdry^D_1"] & C_{0} \otimes D_{0}
	\end{tikzcd}		 
\caption{A commuting diagram representing a double chain complex of two $2$-term chain complexes $C,D$.}
\label{fig:double-complex}
\end{figure}

\noindent with \emph{vertical} boundary maps $\partial_i^v = \partial_i^C \otimes \mathrm{id}^D$ and \emph{horizontal} boundary maps $\partial_i^h = \mathrm{id}^C \otimes \partial_i^D$ such that $\partial_{i}^v\partial_{i+1}^v = 0$, $\partial_{i}^h\partial_{i+1}^h = 0$, and $\partial_i^v\partial_j^h = \partial_j^h\partial_i^v$. An example double complex where $C,D$ are $2$-term complexes is visualized in~\cref{fig:double-complex}. The \emph{total complex} arises when we collect vector spaces of equal dimensions, i.e., ``summing over the diagonals'' in the double complex as follows

\begin{equation}
	\label{eq:TotalComplex}
	\mathrm{Tot}(C\boxtimes D)_n = \bigoplus_{p+q=n} C_p \otimes D_q = E_n
\end{equation}
where the boundary maps are $\partial^{E} = \partial^v \oplus \partial^h$.
Then, the \emph{tensor product complex} $C\otimes D$ is defined as $\mathrm{Tot}(C \boxtimes D)$. For the example given in~\cref{fig:double-complex}, the tensor product complex is

\[C_1 \otimes D_1 \xrightarrow[]{\partial_{2}} C_0 \otimes D_1 \oplus C_1 \otimes D_0 \xrightarrow[]{\partial_{1}} C_0 \otimes D_0,\]

where 
\begin{align*}
    &\bdry_2 = \left( \begin{matrix} \bdry^C_1 \otimes id^D \\ id^C \otimes \bdry^D_1 \end{matrix} \right)\\
    &\bdry_1 = \left(id^C \otimes \bdry^D_1 \mid \bdry^C_1 \otimes id^D\right).
\end{align*}

As the homology of a chain complex is related to the parameters of the corresponding code, the K\"unneth formula is central. It gives a method to compute the homology of a double complex, from the homology of the vertical and horizontal complexes:
\begin{equation}\label{eq:Kuenneth1}
    H_n(C \otimes D) \cong \bigoplus_{p+q=n} H_p(C) \otimes H_q(D).\nonumber
\end{equation}

\subsection{3D Surface Code in the Chain Complex Formalism}
\label{sec:3Dformal}

While the previous construction of a 3D surface code may seem arbitrary at first, we can naturally describe it in the language of chain complexes. This description allows us to further generalise the construction of codes for modular architectures and gives a strong intuition for which codes may or may not be constructed given the connectivity constraints.

First, consider a single loop of qubits with nearest neighbour connectivity between them. It is natural to view the loop as a classical repetition code where half of the qubits are assigned to be data qubits and the other half ancilla qubits as depicted in~\cref{RepetitionLoop}. As mentioned in~\cref{sec:chainC}, we may use a two-term chain complex $C = C_1 \xrightarrow[]{H_X} C_0$ to describe the repetition code, where the data and ancilla qubits are elements (chains) of $C_1$ and $C_0$ respectively and $H_X$ is the parity check matrix of the code, see~\cref{3DSurface}(a).

Moreover, the 2D layout of the loops can be considered as a two-term chain complex \mbox{$D = D_2 \xrightarrow[]{H_Z^T} D_1 \xrightarrow[]{H_X} D_0$} representing a 2D surface code. In this language, each loop is labeled as an element of either $D_2$, $D_1$ or $D_0$, which represent $Z$ checks (faces), data qubits (edges) or $X$ checks (vertices) respectively, see~\cref{3DSurface}(b).

\begin{figure}
    \centering
    \includegraphics[width=\linewidth]{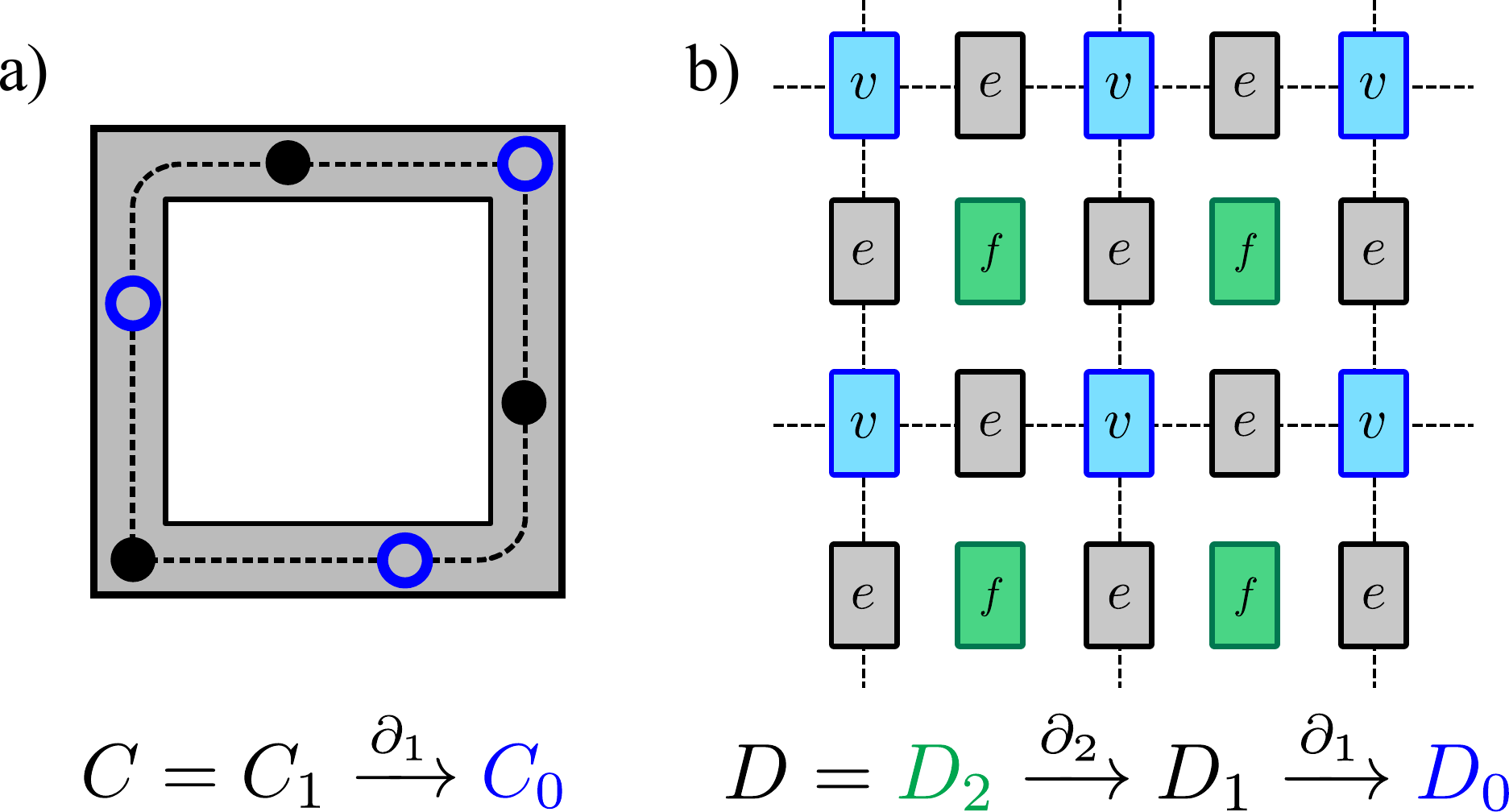}
    \caption{ A single loop corresponds to a two-term chain complex and a layout of loops to a three-term chain complex representing a repetition code (a) and a surface code (b) respectively.}
    \label{3DSurface}
\end{figure}

Using such a layout of loops we have effectively created a new code $\mathcal{E}$.
As outlined in~\cref{eq:TotalComplex}, $\calE{}$ is represented by a chain complex

\begin{equation}
    E{} = C{} \otimes D{},
\end{equation}
where $C$ corresponds to the code within the loop and $D$ corresponds to the code describing the layout of the loops.
\begin{figure}
    \centering
    \includegraphics[width=0.6\linewidth]{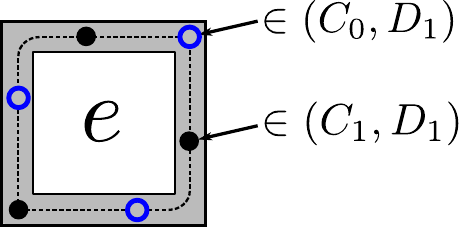}
    \caption{Our observations allow to readily identify to which vector spaces each qubit belongs to. The ancilla qubits of the repetition code are elements (chains) of $C_0$, while the data qubits are chains of $C_1$. The loop itself has an assignment as an edge loop, therefore, every qubit in it belongs to $D_1$.}
    \label{QubitAssignments}
\end{figure}
In more detail, we associate two vector spaces $C_{i}$ and $D_{j}$ to each qubit in our system. In this example, $C_{i}$ describes whether the qubit in the repetition code is an ancilla ($i=0$) or data ($i=1$) qubit, while $D_j$ describes whether the qubit belongs to the face ($j=2$), edge ($j=1$) or vertex ($j=0$) loop. See~\cref{QubitAssignments} for a schematic explanation of qubits in the edge loop.
Furthermore, we assign each qubit to a vector space $E_k$, with $k=i+j$, which form the sequence of vector spaces for a new four-term chain complex
\begin{equation}
    E = E_3 \xrightarrow[]{\partial_3} E_2 \xrightarrow[]{\partial_2} E_1 \xrightarrow[]{\partial_1} E_0,
\end{equation}
where 
\begin{align*}
    E_3 &= C_1 \otimes D_2, \\
    E_2 &= C_0 \otimes D_2 \oplus C_1 \otimes D_1, \\
    E_1 &= C_0 \otimes D_1 \oplus C_1 \otimes D_0, \\
    E_0 &= C_0 \otimes D_0.
\end{align*}
This chain complex $E$ describes a 3D surface code, as per the tensor product of a repetition code and a surface code \cite{zeng2019higher, higgott2022improved}. Here, for example, we can identify the data qubits with chains in $E_1$, and hence, $Z$ ($X$) stabilizers with chains in $E_2$ ($E_0$). Then, parity check matrices are given as boundary operators $\partial_2 = H_Z^T$ and $\partial_1 = H_X$.
Note that in this construction one of the boundaries of the surface is periodic and, also, that we are free to identify the data qubits with chains either in $E_2$ or $E_1$. This freedom corresponds to the choice of having the logical $X$ ($Z$) operators to be planar (string)-like on the 3D surface or the other way around.

As an example, consider an $L\times L$ surface code as the layout of the loops. It has parameters $\llbracket 2(L^2-L)+1, 1, L \rrbracket $. The respective chain complex $D$ has homology $H_1(D) \cong \mathbb{Z}_2$ and $H_0(D)$ is trivial. Similarly, consider a length $L$ classical repetition code inside the loop with parameters $[L, 1, L] $. The respective chain complex $\mathcal{C}$ has homology $H_1(C) \cong \mathbb{Z}_2$ and $H_0(C) \cong \mathbb{Z}_2$ since one of the checks is linearly dependent. 
Then for the chain complex of a 3D surface code $E = C \otimes D$, by identifying the elements of $E_1$ as data qubits, we find $n = \mathrm{dim }E_1 = \mathrm{dim } (C_1 \otimes D_0 \oplus C_0 \otimes D_1) = L\cdot (L^2-L) + L\cdot (2(L^2-L)+1)$, where $D_0$ and $C_0$ are vector spaces associated with $X$ parity checks.
The number of encoded qubits is given by the dimension of the first homology $H_1(E)$. We can compute it using the K\"unneth formula~\cref{eq:Kuenneth1},
\begin{multline*}
k= \mathrm{dim} H_1(E) = \\
= \mathrm{dim} (H_0(C) \otimes H_1(D) \oplus H_1(C) \otimes H_0(D)) = 1.
\end{multline*}
The distance of the code is still L, hence, the resulting 3D surface code has parameters $\llbracket 3L(L^2-L)+L, 1, L \rrbracket $. As an example, for $L=20$ the parameters are $\llbracket 22820, 1, 20 \rrbracket $.

Note that in this construction qubits that were previously data qubits may be reassigned to parity check qubits and vice versa. More importantly, the intra- and inter-modular connectivity requirements of $\calE$ correspond to the connectivity requirements given by codes $\mathcal{C}$ and $\calD{}$. We can prove this for general tensor products of chain complexes.

\begin{theorem}\label{prop:hgp-respects-modularity}
Let $C$ and $D$ be a two- or three-term chain complexes representing classical or quantum codes $\calC, \calD$ respectively. Let their boundaries $\partial^C$ and $\partial^D$ define the intra- and inter-modular connectivity respectively.
Then a quantum code $\calE$ corresponding to the chain complex $E = C \otimes D$ respects the connectivity constraints of the architecture.
\end{theorem}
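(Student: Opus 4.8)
The plan is to unpack the definitions of "respects the connectivity constraints" (from \cref{sec:modules}) and the tensor product complex (from \cref{eq:TotalComplex}), and then verify by direct inspection of the boundary maps $\partial^E = \partial^v \oplus \partial^h$ that every edge of the Tanner graph of $\calE$ is realized either by an intra-modular connection (an edge coming from $\partial^C$) or an inter-modular connection (an edge coming from $\partial^D$). Concretely, I would first fix the bookkeeping: a basis element of $E_n = \bigoplus_{p+q=n} C_p \otimes D_q$ is a pair $(\overline{c}, \overline{d})$ with $\overline{c} \in C_p$, $\overline{d} \in D_q$; I assign this basis element to the module $M_k$ indexed by $\overline{d}$, and to the "slot" inside that module indexed by $\overline{c}$. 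Since $C$ has the same number of basis elements in each degree available to the chosen layers, all modules are equivalent copies, as required by the definition of modular architecture, and $\overline{d}$ ranges over the basis elements of $D_p \oplus D_q$ which is exactly the Tanner-graph vertex set of $\calD$.

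Next I would identify, for the CSS code $\calE$, which $E_n$ hosts the qubits and which host the $X$- and $Z$-checks (this is a choice, but any of the valid choices in \cref{sec:3Dformal} works); say qubits live in $E_i$ and checks in $E_{i\pm 1}$. An edge of the quantum Tanner graph of $\calE$ connects a check basis element to a qubit basis element iff the latter appears in the boundary (or coboundary) of the former. By the explicit form of $\partial^E$ — e.g. for the four-term example, $\partial_2 = \bigl( \begin{smallmatrix} \partial^C \otimes \mathrm{id}^D \\ \mathrm{id}^C \otimes \partial^D \end{smallmatrix}\bigr)$ — any such edge is induced by exactly one of the two summands. If it comes from the $\mathrm{id}^C \otimes \partial^D$ block, then the two basis elements share the same $C$-index (same slot $i$ within their respective modules) and their $D$-indices are Tanner-adjacent in $\calD$; this is precisely an inter-modular connection between the respective qubits $q_i$ of the two modules, matching \cref{def:modules}. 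If it comes from the $\partial^C \otimes \mathrm{id}^D$ block, then the two basis elements share the same $D$-index (same module $M_k$) and their $C$-indices are Tanner-adjacent in $\calC$; this is exactly an intra-modular connection inside $M_k$, matching \cref{def:qubits}. So every required connection is available, which is what it means for $\calE$ to respect the architecture.

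The routine part is checking that this covers all cases: I must also handle check-qubit adjacencies on the other side ($\partial_{i+1}$ versus $\partial_i$, i.e. whether one uses $\partial^E$ or its transpose), and handle the mixed degrees where $E_n$ is a direct sum of several $C_p \otimes D_q$ pieces — but in every block the map is still a pure tensor $\partial^C \otimes \mathrm{id}$ or $\mathrm{id} \otimes \partial^D$, so the same dichotomy applies termwise. The one genuine subtlety, and the step I expect to require the most care, is the case distinction when $C$ (or $D$) is a three-term complex: then a given basis element of $C$ may simultaneously play the role of a data site and sit in the support of both an $X$-type and a $Z$-type neighbour, so I need to make sure the "slot index $i$ is preserved along $\mathrm{id}^C \otimes \partial^D$" statement is still literally the condition in \cref{def:modules} (it is, because $\mathrm{id}^C$ acts as the identity on indices), and that the LDPC/sparsity of $\calE$ — hence the finiteness and equality of module sizes and the boundedness of inter-modular degree — follows from that of $\calC$ and $\calD$. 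I would close by remarking that the converse inclusion (the architecture provides \emph{only} these connections, none spurious) also holds by the same computation, so the correspondence is tight.
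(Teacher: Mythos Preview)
Your proposal is correct and follows essentially the same approach as the paper: decompose $\partial^E = \partial^v \oplus \partial^h$ into its $\partial^C \otimes \mathrm{id}^D$ and $\mathrm{id}^C \otimes \partial^D$ blocks, and match each to \cref{def:qubits} and \cref{def:modules} respectively. You are somewhat more explicit about the bookkeeping (handling direct sums, both boundary maps, and the three-term case) and add remarks on sparsity and the converse inclusion that the paper omits, but the core argument is identical.
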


\begin{proof}

The chain complex $E$ (corresponding to $\calE$) is at least a three-term chain complex given that both $C$ and $D$ are at least two-term chain complexes. Therefore, we can identify chains of $E_i$ with data qubits, $E_{i+1}$ with Z parity checks and $E_{i-1}$ with X parity checks. The required qubit connectivity of $\cal{E}$ is defined by its parity check matrices, which are given by the boundary operators 
\begin{equation}
\label{eq:proof1}
    \partial^E =  \partial^h \oplus \partial^v =  \mathrm{id}^C \otimes \partial^D \oplus \partial^C \otimes \mathrm{id}^D.
\end{equation}

Note that $i$-chains of $C$ label the qubit $c$ within each module and that $i$-chains of $D$ label each module $d$. We can ensure that $\mathcal{E}$ respects the connectivity constraints of the architecture if both terms of~\cref{eq:proof1} match the given qubit connectivity.
We do so by looking at both of the terms separately.

The basis elements are pairs of chains $(c,d) \in C\times D$ on which the first term acts as $\partial^h: (c, d) \rightarrow (\mathrm{id}^C(c),\partial^D(d))$ $\forall c,d$. By linear extension this defines a map on $C\otimes D$. 
It is equally stated that \textit{each} qubit $c$ in a module $d$ is connected to its \textit{respective} qubits $c$ in adjacent modules given by $\partial^D$ $\forall d$. This matches~\cref{def:modules} for the inter-modular connectivity. Similarly, the second term in~\cref{eq:proof1} defines a map that acts on basis elements as $\partial^v: (c, d) \rightarrow (\partial^C(c),\mathrm{id}^D(d))$ $\forall c,d$. This is equally stated that in \textit{each} module $d$ a qubit $c$ is connected to qubits $\partial^C(c)$ $\forall c$. This matches~\cref{def:qubits} for intra-modular connectivity. 
Therefore, the required qubit connectivity of $\cal{E}$ is given by some additive combination of terms which define the intra- and inter-modular connectivity respectively.  
\end{proof}

\noindent Furthermore, it is clear that the Theorem~\ref{prop:hgp-respects-modularity} still applies whenever boundaries $\partial^C$ and $\partial^D$ define any subgraphs of intra- and inter-modular connectivity respectively. If the subgraphs are proper then some connectivity that is allowed by the architecture is not required. Notice that our qubit assignment in the chain complex formalism completely matches the assignments given in~\cref{Tab1} if we identify the data qubits with the vector space $E_1$ and ignore qubits in $E_3$. Since the qubit assignments and the qubit interactions match between both perspectives, the stabilisers of the code match as well. This proves that our previous construction in~\cref{sec:3Dsurface} produced a 3D surface code.

The correspondence between modular architectures and quantum codes obtained from product constructions is very natural and gives an intuitive way of designing codes that obey architectural connectivity constraints. In the next sections we propose generalisations of this idea in a step by step fashion. First, we generalise the intra-modular connectivity by considering a less local code within each module. Then we similarly generalise the inter-modular connectivity. Finally, we elaborate on a more general product code constructions by allowing \emph{twists} between inter-modular connections.

\section{Hypergraph Product Codes}\label{sec:hgp}
In the previous section it was shown that the proposed formalization of the looped pipeline architecture enables to obtain a 3D surface code that can be viewed in a rigorous way as the tensor product of two chain complexes $C \otimes D{}$, where $C$ corresponds to the loop structure and $D$ to the (grid-like) layout. 

In this section, we further elaborate on the tensor product code construction and extend it to the setting of more general intra-modular and inter-modular connectivity.

\subsection{Generalised Intra-Modular Connectivity}

The first generalisation of the proposed formalization of the looped pipeline architecture is to replace the loops of qubits with modules admitting a more general intra-modular connectivity. Similarly to viewing the loops of qubits as repetition codes, we view this connectivity as a chain complex corresponding to some code $\calC{}$, for instance a simple classical linear block LDPC code. Note that in general this implies that a higher degree of intra-modular connectivity is needed.

Formally we consider a tensor product $E = C \otimes D$ of a chain complex $C$ corresponding to some classical or quantum code and a 3-term chain complex $D$ corresponding to a surface code, describing the overall layout of modules. This yields a 4 or 5-term chain complex $E$ which can be viewed as a surface code layout of modules, where each module is replaced by an arbitrary code given by the chain complex $C$. In general this does not yield a nice 3D geometry as in the more simple 3D surface code case.

As an explicit example, we consider intra-modular connectivity that corresponds to a classical linear code. We obtain a code by generating a random sparse parity check matrix with dimensions $51 \times 60$. Through exhaustive search over all codewords we find the code parameters $[60, 9, 20]$. Its maximum row or column weight is $8$, but on average each check has a support of $\approx 5$ bits. See~\cite{githurepo2} for the full parity check matrix. The respective chain complex $C$ has homology dimension $\mathrm{dim}(H_1(C))=9$ since it encodes $9$ bits and $H_0(C)$ is trivial as all parity checks are linearly independent. The homological properties of the chain complex $D$ associated with the surface code describing the inter-modular layout are given in the previous example. Note that $H_2(D)$ is trivial and for this example we choose a surface code with length $L=20$.
Then for the resulting chain complex $E = C \otimes D$, by identifying the elements of $E_2$ as data qubits, we find $n = \mathrm{dim}E_2 = \mathrm{dim} (C_1 \otimes D_1 \oplus C_0 \otimes D_2) = 65040$, where $D_2$ and $C_0$ are vector spaces associated with $Z$ and $X$ parity checks respectively.
The number of encoded qubits is given by the dimension of the second homology $H_2(E)$. We can compute it using the K\"unneth formula (\cref{eq:Kuenneth1}),
\begin{multline*}
k= \mathrm{dim} H_2(E) = \\
= \mathrm{dim} (H_1(C) \otimes H_1(D) \oplus H_0(C) \otimes H_2(D)) = 9.
\end{multline*}

Generally, finding the distance of the code is not trivial, thus usually Monte-Carlo simulations or similar approaches are employed. For general hypergraph products of arbitrary length chain complexes, Zeng and Pryadko~\cite{zeng2019higher} proposed methods to compute upper and lower bounds on the distance of the hypergraph product complex from the distances of the individual complexes in the product. 
Moreover, for the special case where one of the chain complexes in the hypergraph product is a 1-term complex, given by a binary check matrix, Zeng and Pryadko show that their result allows to compute the distance exactly. Recall that the homological distance $d_i$ is the minimum Hamming weight of a non-trivial representative in the $i$-th homology group.

\begin{theorem}{(\cite{zeng2019higher})}\label{theorem:hp-distance-special-case}
	Let $A$ be a $m$ chain complex with distances $d_i$ for $0 \leq i \leq m$ and let $B$ be a 2-term chain complex. Then, 
	$$d_i(A \otimes B) = \text{min}(d_{i-1}(A)d_1(B), d_i(A)d_0(B)).$$
\end{theorem}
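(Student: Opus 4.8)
The plan is to prove the two inequalities separately, working over $\F_2$ and writing $E=A\otimes B$, so that $E_i=(A_i\otimes B_0)\oplus(A_{i-1}\otimes B_1)$ with differential $\partial^E=\mathrm{id}^A\otimes\partial^B\;\oplus\;\partial^A\otimes\mathrm{id}^B$. I adopt the convention that a homological distance equals $+\infty$ when the corresponding homology vanishes; this is what makes the right-hand side of the claim automatically select the relevant term. A useful preliminary remark is that for a two-term $B$ one has $d_0(B)\in\{1,\infty\}$: if $H_0(B)=\mathrm{coker}\,\partial^B\neq 0$ then some standard basis vector of $B_0$ lies outside $\mathrm{im}\,\partial^B$, so it is a weight-one nontrivial representative.

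For the upper bound I would produce explicit low-weight nontrivial cycles from minimum-weight homology representatives of the factors. If $\alpha\in Z_i(A)$ represents a nonzero class of weight $d_i(A)$ and $\beta\in B_0$ represents a nonzero class of $H_0(B)$ of weight $d_0(B)$, then $\alpha\otimes\beta\in A_i\otimes B_0\subseteq E_i$ is a cycle (both summands of $\partial^E$ annihilate it), it is homologically nontrivial because the K\"unneth formula identifies its class with $[\alpha]\otimes[\beta]\neq 0$, and it has weight $d_i(A)\,d_0(B)$. Symmetrically, $\alpha'\otimes\beta'$ with $\alpha'\in Z_{i-1}(A)$ of minimal nonzero weight and $\beta'\in\ker\partial^B=Z_1(B)$ of weight $d_1(B)$ is a nontrivial cycle in $A_{i-1}\otimes B_1\subseteq E_i$ of weight $d_{i-1}(A)\,d_1(B)$. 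Hence $d_i(E)\le\min\!\big(d_{i-1}(A)d_1(B),\,d_i(A)d_0(B)\big)$.

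For the lower bound, let $z\in Z_i(E)$ be a cycle with $[z]\neq 0$ and expand it in the standard bases of $B_0,B_1$ as $z=\sum_b y_b\otimes f_b+\sum_a x_a\otimes e_a$; the cycle equations read $x_a\in Z_{i-1}(A)$ for all $a$ and $\partial^A y_b=\sum_a(H_B)_{ba}x_a$ for all $b$, so in particular $\sum_a(H_B)_{ba}[x_a]=0$ in $H_{i-1}(A)$. I split on whether some $x_a$ is homologically nontrivial. If so, applying a functional on $H_{i-1}(A)$ not vanishing on that class yields a nonzero word of $\ker H_B$ supported inside $\{a:[x_a]\neq 0\}$, so this set has at least $d_1(B)$ elements, each contributing weight $\ge d_{i-1}(A)$, giving $|z|\ge\sum_a|x_a|\ge d_{i-1}(A)d_1(B)$. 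If instead every $x_a$ is an $A$-boundary, I would use that for each cocycle $\psi\in\ker H_B^{T}$ the combination $Y_\psi:=\sum_b\psi_b y_b$ is an $i$-cycle of $A$ (since $\sum_b\psi_b\partial^A y_b=\sum_a(H_B^{T}\psi)_a x_a=0$), together with the structural claim that if moreover $[Y_\psi]=0$ for every such $\psi$ then $z$ is already a boundary of $E$. Since $[z]\neq 0$, some $[Y_\psi]\neq 0$, so $|z|\ge\sum_{b\in\mathrm{supp}\,\psi}|y_b|\ge|Y_\psi|\ge d_i(A)$; and in this branch $H_0(B)\neq 0$ (else the same claim forces $z$ to be a boundary), so $d_0(B)=1$ and this reads $|z|\ge d_i(A)d_0(B)$. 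Combining the two branches gives the reverse inequality, hence equality.

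The step needing the most care, and the natural main obstacle, is the structural claim used in the second branch: a cycle whose $x_a$ are all $A$-boundaries and whose combinations $Y_\psi$ are all $A$-boundaries is a boundary of $E$. To prove it I would first correct $z$ by the boundary of $\sum_a u_a\otimes e_a$ with $\partial^A u_a=x_a$ to reduce to $z_1=0$, rewrite the remaining coefficients as $A$-cycles, and then, using the $\F_2$-duality $\ker H_B^{T}=(\mathrm{im}\,H_B)^{\perp}$, check that the resulting tuple of classes in $H_i(A)$ lies in the image of $\mathrm{id}\otimes H_B$ exactly when all $Y_\psi$ are trivial, which is precisely the condition for $z$ to be a boundary. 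This is where the two-term hypothesis on $B$ is genuinely used (so that $\ker H_B^{T}$ models $H^0(B)$ and $d_0(B)\le 1$); it is the technical core of the cited result of Zeng and Pryadko, whose argument one could alternatively invoke directly. The remaining ingredients---the upper bound, the K\"unneth identification, and the first branch---are routine.
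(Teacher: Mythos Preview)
The paper does not prove this theorem at all: it is quoted verbatim from Zeng and Pryadko and then used as a black box to compute the distance of the worked example, so there is no in-paper argument to compare against.

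That said, your proof is correct and self-contained. The upper bound via explicit tensor cycles together with the K\"unneth identification is routine. For the lower bound, your two-branch analysis---some $x_a$ homologically nontrivial versus all $x_a$ boundaries---is precisely the mechanism underlying the cited result, and the structural claim you isolate (reduce to $z_1=0$ by subtracting $\partial^E(\sum_a u_a\otimes e_a)$, then use $\ker H_B^{T}=(\operatorname{im} H_B)^{\perp}$ over $\F_2$ to recognise the remaining tuple of $H_i(A)$-classes as lying in $\operatorname{im}(\mathrm{id}\otimes H_B)$) goes through cleanly. Your preliminary remark that $d_0(B)\in\{1,\infty\}$ for a two-term $B$ is exactly what makes the second branch collapse to $|z|\ge d_i(A)\,d_0(B)$, and you correctly note that $H_0(B)=0$ forces that branch to be vacuous. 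In short, the paper simply invokes the reference, while you have reconstructed a direct proof in the spirit of that reference; nothing is missing.
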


Applying~\cref{theorem:hp-distance-special-case} to the previous example we find the $Z$ distance of the code is
\begin{equation*}
    d_2(E) = \text{min}(d_{1}(D)d_1(C), d_2(D)d_0(C)) = 400,
\end{equation*}
where by convention $d_i(A) = \infty$ if $H_i(A)$ is trivial. And we find the $X$ distance of the code using cohomology
\begin{equation*}
    d^2(E) = \text{min}(d^{1}(D)d^1(C), d^2(D)d^0(C)) = 20.
\end{equation*}
Therefore, the code $E = C\otimes D$ has parameters $\llbracket 65040, 9, 20 \rrbracket$. 
Given that this code is constructed as a tensor product of a quantum and a classical code, and moreover, that the $Z$ distance is $d^2= 400$, we choose to compare it to the traditional 3D surface code with distance $20$ as it matches both the $X$ and $Z$ distance parameters. In comparison to the 3D surface code (with full parameters given in Subsection~\ref{sec:3Dformal}) one can see that our example code encodes 9 times more qubits at the cost of increasing the overall number of physical qubits by a factor of about 3. We would expect such favourable trade-offs for architectures with higher qubit connectivity.

\begin{figure*}[t]
    \centering
    \includegraphics[width=0.5\linewidth]{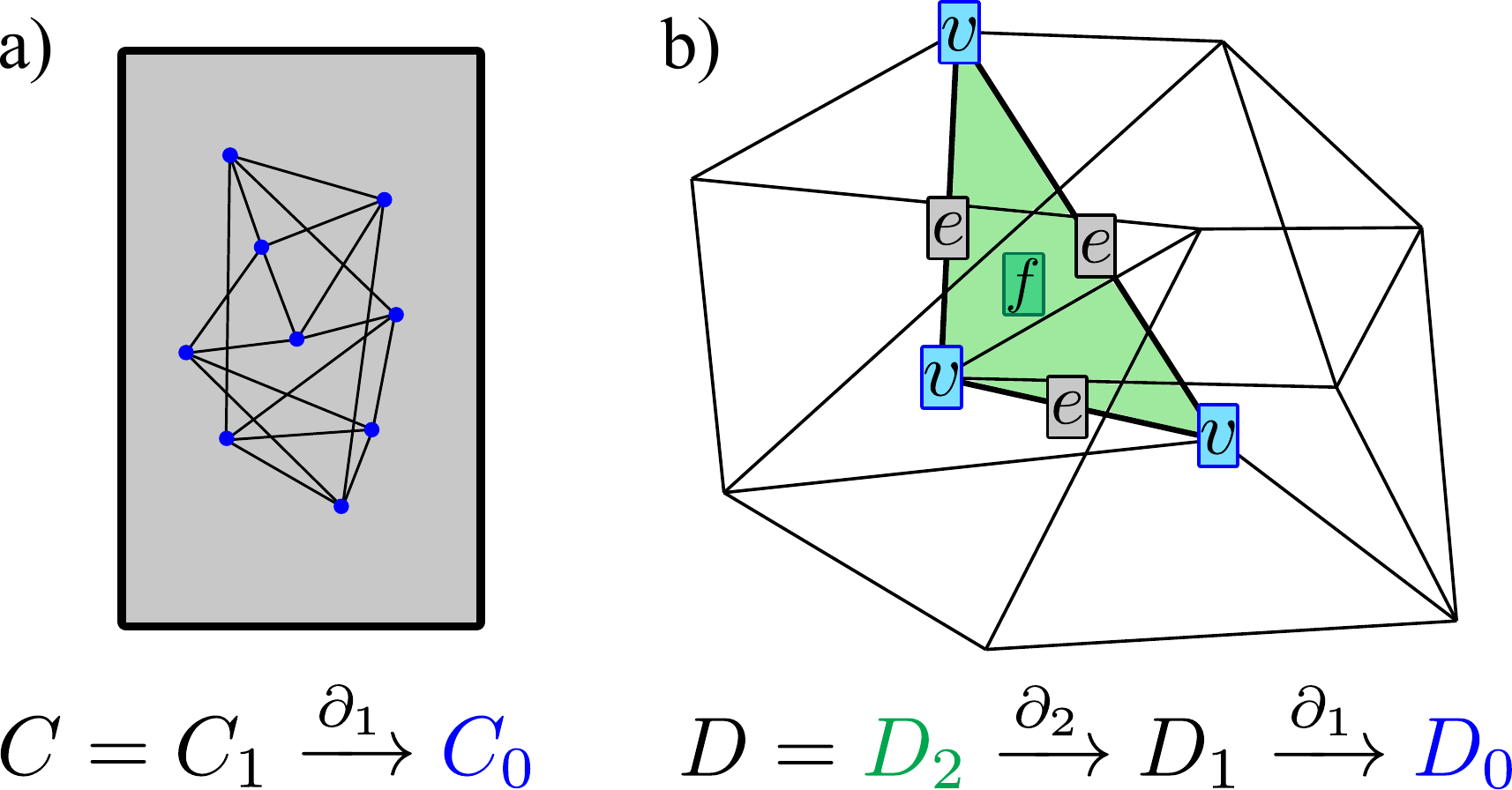}
    \caption{(a) We represent the intra-modular qubit connectivity with a chain complex $C$, where the qubits sit on all $i$-chains. (b) Similarly, we represent the inter-modular connectivity with a chain complex $D$, where modules are placed on every $i$-chain. Some elements of the chain complex are highlighted. The code $E = C \otimes D$ fully respects the connectivity constraints of the architecture (\autoref{prop:hgp-respects-modularity}).}
    \label{fig:genIdea}
\end{figure*}

This idea can be generalised to more connected modules, by considering a connectivity inside each module that corresponds to a quantum code, i.e., a three-term chain complex.
Note that this may imply an even higher degree of intra-modular connectivity and a higher number of qubits associated with each module. The latter implies that there are more connections between modules as per~\cref{def:modules}. 
Due to the generality of the chain complex formalism, this construction is equivalent to the previous case, with the only difference that the resulting product is a 5-term chain complex.

\subsection{Beyond Planar Surface Layouts}\label{sec:beyond}
In a similar fashion to the discussion on higher intra-modular connectivity, we can generalise the layout of the modules. This can be done by considering layouts of modules corresponding to a general code given by a chain complex $D$. We examine 3-term chain complexes describing quantum codes, but the same reasoning is applicable to classical codes. The 2D grid layout (corresponding to a surface code) has the advantage of planarity, which implies a sparse nearest-neighbour connectivity. However, the same planarity clearly limits the code parameters of the derived product codes. Additionally, some architectures (e.g. based on photonic links between modules) might not even be subject to nearest-neighbour communication constraints and hence would have no benefit from a planar layout of modules. In such cases as long as it holds that modules communicate with a few other modules each, it is not crucial that they are located close to each other physically and we can use less geometrically local codes to describe the layout of the modules.
We consider a modular layout given by a chain complex

\[D = D_F \xrightarrow[]{\bdry_2} D_E \xrightarrow[]{\bdry_1} D_V. \]

For illustrative purpose we call the vector spaces of the chain complex the spaces of \emph{faces}, \emph{edges}, and \emph{vertices}. $D$ can be made to correspond to the given inter-modular connectivity by associating modules to each $i$-chain of $D$. That is, the spaces $D_F,D_E,D_V$ are associated with a set of modules $\set{m_1, \dots, m_{\abs{V}+\abs{E}+\abs{F}}}$ and the modules are connected corresponding to the boundary maps of $D$. In other words, the differentials $\bdry_2, \bdry_1$ are the incidence matrices associating faces and edges, and edges and vertices respectively. 
If $C$ denotes the chain complex describing the intra-modular connectivity, then by~\cref{prop:hgp-respects-modularity} the tensor product complex 
$C \otimes D$ represents a quantum code that respects the connectivity of the overall architecture. This idea is illustrated in~\cref{fig:genIdea}.

We are now able to formally describe codes arising from modular architectures using the following recipe: we take an arbitrary CSS code representing the intra-modular connectivity (depending on the desired degree of connectivity, number of qubits, and code parameters) and another CSS code that represents the inter-modular connectivity. Then using the tensor product chain complex we construct a new code satisfying the architectural connectivity constraints. 
This can already give moderately good codes for a specific modular architecture, depending on the chosen seed codes and the allowed degree of connectivity between modules.

\subsection{From Codes to Modular Architectures}
\begin{figure}
    \centering
    \includegraphics[width=\linewidth]{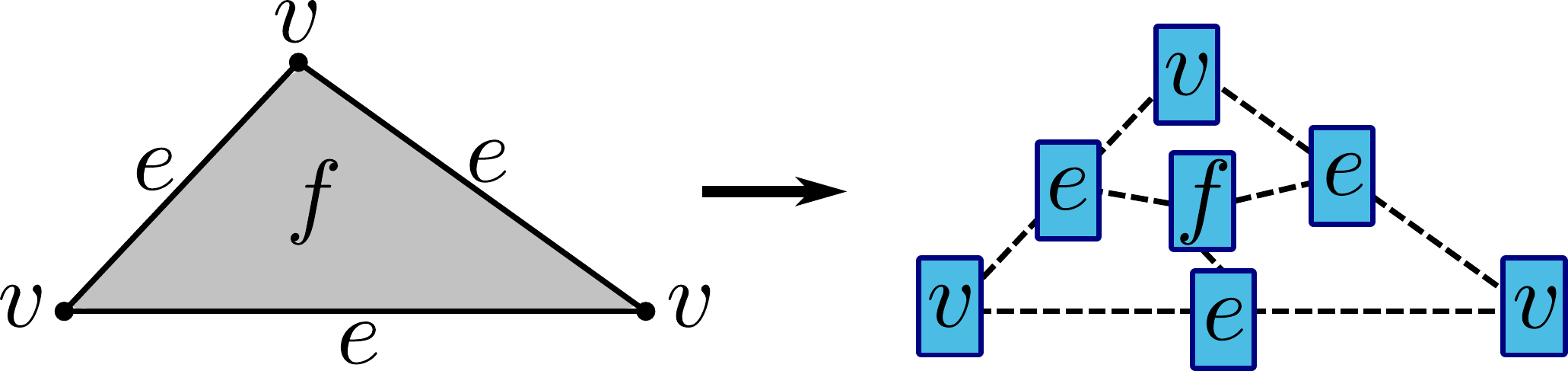}
    \caption{We can take any chain complex (on the left) to describe the inter-modular connectivity of our architecture by replacing every element of the chain complex by a module (on the right).}
    \label{GenModuleComplex}
\end{figure}

Let us briefly point a slightly different view on the construction presented above. An architectures might allow an ``all-to-all'' connectivity between modules, or a connectivity that is not constrained other than requiring that each module may only be connected to a constant number of other modules. Then, we can ask the question of given these ``weak'' constraints, how should we arrange the modules in order to generate a good code tailored for the architecture? This can be done by choosing a code defined by a chain complex $C$ 
and mapping it to the respective connectivity graph as introduced in \ref{sec:chain-complexes}.This graph in turn, defines the overall modular layout as depicted in~\cref{GenModuleComplex}.

\section{Balanced Product Constructions}\label{sec:bp}
In the previous section we showed how modular architectures can be viewed as tensor products of chain complexes (that describe codes). This perspective corresponds to having intra-modular connectivity--the layout of qubits within each module--and inter-modular connectivity--the layout of modules themselves--being determined by such codes. In order to further generalise the previous constructions, we consider modular architectures where the inter-modular connections are not constrained to be between the respective qubits only. Instead, these connections can be tweaked (\emph{twisted}) in some way that will be dictated by the product construction. %
Hence, we redefine the inter-modular connectivity as follows

\begin{definition}
\label{def:modules2}
A module $M_k$ is \textit{connected} to a module $M_j$ if the architecture allows us to directly implement two-qubit entangling operations between a qubit $q_i \in M_k$ and its \textit{respective} qubit $q_l \in M_j$ for all $i,l$.

\end{definition}
\noindent Note that many quantum computing platforms that consider linking modules together with photonic links or similar, already allow this degree of freedom. For this setting, we consider more general products than standard hypergraph products. Specifically, we show that the newly defined inter-modular connectivity allows us to construct codes that can be described in the language of \emph{balanced product codes}~\cite{DBLP:journals/tit/BreuckmannE21}. As before, these codes fully respect the architectural connectivity constraints. Before proceeding to the proof, let us shortly introduce the notion of balanced products of chain complexes.

\subsection{Balanced Product Chain Complexes}\label{sec:bp-chain-compl}
Breuckmann and Eberhardt (BE) introduced \emph{balanced product codes}, which are analogously constructed to the balanced (or mixed) product of topological spaces~\cite{DBLP:journals/tit/BreuckmannE21}. This and related constructions can be used to construct \emph{asymptotically good} quantum codes~\cite{DBLP:journals/tit/BreuckmannE21, panteleev2022asymptotically, leverrier2022quantum, lin2022good,dinur2022good}.

In order to define the balanced product of chain complexes we need to discuss the balanced product of vector spaces. Let $V,W$ be vector spaces with a linear right and left action respectively of a finite group $G$. The balanced product is defined as the quotient

\[V\otimes_G W = V\otimes W/\angles{vg\otimes w - v \otimes gw},\]
where  $v \in V, w\in W$, and $g\in G$.
If $V$ and $W$ have bases $X$ and $Y$, and $G$ maps basis vectors to basis vectors then the basis of $V \otimes_G W$ is given by $X \times_G Y = X\times Y/\sim$. The equivalence relation $\sim$ is defined as \mbox{$(x,y) \sim (xg, g^{-1}y)$} for all $x\in X, y\in Y$, and $g\in G$. We can now extend this notion to chain complexes. Let $C$ and $D$ be chain complexes where $C$ has a linear right action and $D$ has a linear left action of a group $G$. %
The \emph{balanced product double complex} $C \boxtimes_G D$ is defined via 

\[(C \boxtimes_G D)_{p,q} = C_p \otimes_G D_q, \]
with horizontal and vertical differentials defined analogously to the double complex~\cref{eqn:doubleComplex}, that act on the quotients $C_p \otimes_G D_q$ of vector spaces $C_p$ and $D_q$.
The \emph{balanced product complex} is the corresponding total complex: 

\[C \otimes_G D = \text{Tot}(C\boxtimes_G D).\]
We limit the discussion to cases where the vector spaces $C_i, D_i$ are based and the action of $G$ restricts to an action on these bases~\cite{DBLP:journals/tit/BreuckmannE21}.
If $G$ is a finite group of odd order, BE~\cite{DBLP:journals/tit/BreuckmannE21} gave a version of the K\"unneth formula that can be applied to the balanced product complex:
\begin{equation}\label{eq:kunneth-bp}
H_n(C \otimes_G D) \cong \bigoplus_{p+q=n} H_p(C) \otimes_G H_q(D).
\end{equation}
We want to emphasize that for certain cases the balanced product is equivalent to related constructions such as the \emph{lifted product}~\cite{panteleev2021quantum, panteleev2021degenerate} and the \emph{fiber bundle construction}~\cite{hastings2021fiber}. %

\subsection{Architecture Tailored Codes From Balanced Products}

To prove that the more general inter-modular connectivity including twists allows us to construct better quantum codes than those constructed as hypergraph products, we consider cases where we can cast the balanced product $C \otimes_G D$ as a fiber bundle complex $B \otimes_\varphi D$. In this complex, $B$ denotes the base, $D$ denotes the fiber and $\varphi$ the connection that describes the \emph{twists} of the fiber along the base. 
Using the homological language, the connection $\varphi$ represents an automorphism on the fiber $D$ that alters the horizontal differentials of the double complex $B \boxtimes D$. Similarly to other products, the fiber bundle complex can be used to describe a quantum error correcting code once we identify the data qubits and the parity checks correspondingly. Such codes are called \emph{fiber bundle codes}~\cite{hastings2021fiber}. 
In our modular architecture setting, it is crucial to correctly identify the base and fiber chain complex to ensure that connections are twisted between modules only. This idea is depicted in~\cref{fig:connection-twists}. 
Note that, twisting connections in the ``other direction'' (i.e., the connections are twisted between layers of respective qubits across all modules) generally results in a low connection count in each module. For many quantum platforms, the intra-modular connections are considered to be faster to implement and less noisy, and, therefore, preferential over the qubit connections between distinct modules~\cite{ramette2023fault, campbell2007, Stephenson2020}.

BE showed that when $C$ is a two-term complex and $H$ is abelian and acts freely on the bases of each $C_i$, then there exists a connection $\varphi$ s.t. $C \otimes_G D = B \otimes_\varphi D$, where $B_i = C_i/\angles{cg-c}$~\cite{DBLP:journals/tit/BreuckmannE21}. Therefore, a wide range of codes constructed from balanced products can be recast into the language of fiber bundle codes. Here we restrict ourselves to these cases and cast the following theorem in terms of fiber bundle codes.

\begin{figure}
    \centering
    \includegraphics[width=\linewidth]{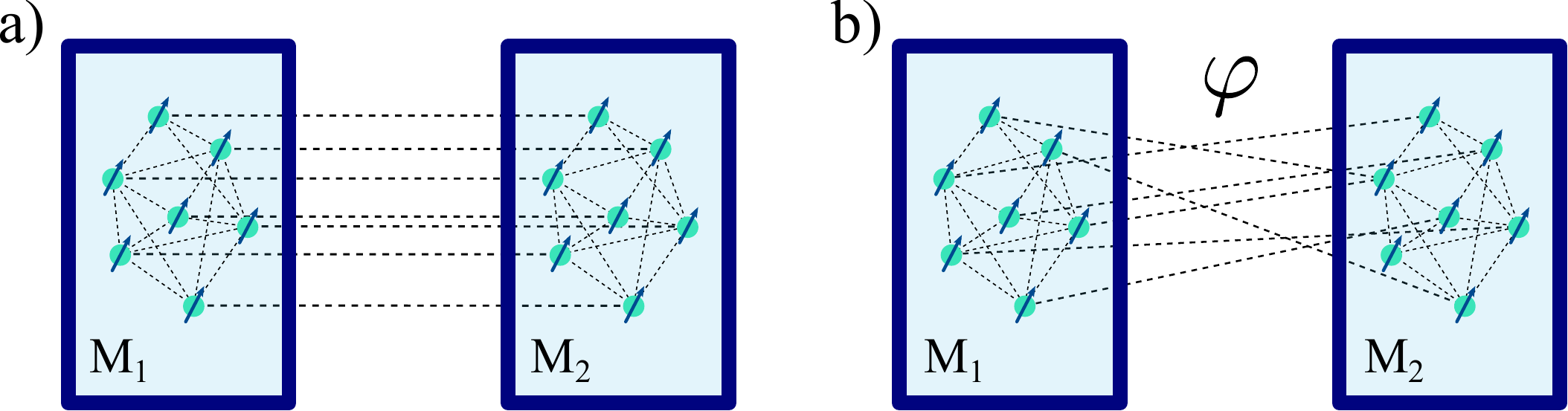}
    \caption{Allowing twists of inter-module connections allows to create better codes and yields a more general formulation.}
    \label{fig:connection-twists}
\end{figure}

\begin{theorem}\label{thm:bp-respects-architecture}
Let $D$ be a two-term and $C$ a two- or three-term chain complex. Let their boundaries $\partial^C$ and $\partial^D$ define the intra- and inter-modular connectivity as given in~\cref{def:qubits} and~\cref{def:modules2} respectively. Then, a fiber bundle code $\mathcal{E}$ corresponding to the chain complex $E = D \otimes_\varphi C$ respects the connectivity constraints of the architecture.
\end{theorem}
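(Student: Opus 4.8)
The plan is to mirror the proof of \cref{prop:hgp-respects-modularity} as closely as possible, tracking how the twisting by the connection $\varphi$ modifies only the horizontal (inter-modular) differentials while leaving the vertical (intra-modular) ones untouched. First I would recall that since $D$ is two-term and $C$ is two- or three-term, the fiber bundle complex $E = D \otimes_\varphi C$ is at least a three-term chain complex, so we may consistently identify $i$-chains of $E$ with data qubits and $(i\pm 1)$-chains with $X$ and $Z$ checks. As in the earlier proof, the required qubit connectivity of $\calE$ is read off from its boundary maps, which by the fiber bundle construction decompose as $\partial^E = \partial^h_\varphi \oplus \partial^v$, where the vertical part $\partial^v = \partial^D \otimes \mathrm{id}^C$ is unchanged and the horizontal part $\partial^h_\varphi$ is the $\varphi$-twisted version of $\mathrm{id}^D \otimes \partial^C$.

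Next I would treat the two summands separately, exactly as before. The vertical differential $\partial^v$ acts on a basis pair $(d,c)$ as $(d,c) \mapsto (\partial^D(d), \mathrm{id}^C(c))$; reading this module-by-module, it states that module $d$ is connected to the modules prescribed by $\partial^D$, with each qubit linking to its \emph{respective} qubit --- but this is precisely the \emph{untwisted} inter-modular connectivity, which is a special case of \cref{def:modules2} (taking $l=i$), hence allowed. Wait --- here lies the subtlety: in the fiber bundle picture it is conventional to let $D$ play the role of the base, so that the \emph{twisted} differential is the one carrying the inter-modular (base-direction) structure. I would therefore set up the identification so that $B_i = D_i/\angles{dg-d}$ is the base and $C$ is the fiber, and argue that $\partial^h_\varphi$ acts on a basis pair $(b,c)$ by sending it to a combination of pairs $(b', \varphi_e(c))$ where $b'$ ranges over $\partial^B(b)$ and $\varphi_e$ is the automorphism of $C$ attached to the relevant edge $e$ of the base. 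Since $\varphi_e$ maps basis vectors of each $C_i$ to basis vectors, this says: qubit $c$ in one module connects to qubit $\varphi_e(c)$ --- a \emph{permuted}, i.e.\ twisted, respective qubit --- in the adjacent module. That is exactly the freedom granted by \cref{def:modules2} with $l$ determined by $\varphi_e$.

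Finally I would observe that the remaining (vertical-type) differential of the fiber bundle complex is the genuinely intra-modular one: it acts within a single fiber copy sitting over a fixed vertex of the base, sending $(b,c) \mapsto (\mathrm{id}^B(b), \partial^C(c))$, so inside each module the qubit $c$ connects to the qubits $\partial^C(c)$, matching \cref{def:qubits}. Collecting the two pieces, the full connectivity of $\calE$ is an additive combination of a set of twisted inter-modular links (from $\partial^h_\varphi$) and intra-modular links (from $\partial^v$), all of which are permitted by \cref{def:qubits,def:modules2}; hence $\calE$ respects the architectural constraints. The main obstacle I anticipate is bookkeeping around which complex is the base versus the fiber and verifying carefully that the connection $\varphi$ only ever relabels qubits \emph{within} a module (never mixing qubits of distinct modules in a way not covered by \cref{def:modules2}), together with invoking the BE result that guarantees such a $\varphi$ and base $B_i = D_i/\angles{dg-d}$ exist when $D$ is two-term with a free abelian $G$-action; the rest is a direct transcription of the hypergraph-product argument.
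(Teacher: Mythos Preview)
Your proposal is correct and follows essentially the same route as the paper: decompose $\partial^E$ into a twisted horizontal (base-direction) part and an untwisted vertical (fiber-direction) part, then match the former to \cref{def:modules2} and the latter to \cref{def:qubits}. Two small clean-ups: the theorem is already stated for the fiber bundle complex $E = D \otimes_\varphi C$ with $D$ the base and $C$ the fiber, so there is no need to pass through a quotient $B_i = D_i/\langle dg - d\rangle$ or to invoke the BE equivalence---you can work directly with $D$ as the base; and your initial assignment $\partial^v = \partial^D \otimes \mathrm{id}^C$ with the twist on the $\partial^C$ side is indeed backwards (as you noticed), so in a write-up simply start from the corrected form $\partial^E = \partial_\varphi \oplus (\mathrm{id}^D \otimes \partial^C)$ with $\partial_\varphi(d_1\otimes c) = \sum_{d_0 \in \partial^D d_1} d_0 \otimes \varphi(d_1,d_0)(c)$, which is exactly what the paper uses.
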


\begin{proof}
    The resulting chain complex $E$ (corresponding to $\calE$) is at least a three-term chain complex given that both $C$ and $D$ are at least two-term chain complexes. Therefore, we can identify chains of $E_i$ with data qubits, $E_{i+1}$ with Z parity checks and $E_{i-1}$ with X parity checks. The required qubit connectivity of $\cal{E}$ is defined by the parity check matrices $H^E$, which are given by the boundary operators 
\begin{equation}
\label{eq:proof2}
    \partial^E =  \partial^h \oplus \partial^v =  \partial_{\varphi} \oplus \mathrm{id}^D \otimes \partial^C,
\end{equation}
where $$\partial_{\varphi}(d_1 \otimes c) = \sum_{d_0 \in \partial^D d_1} d_0 \otimes \varphi(d_1, d_0)(c)$$
in which $d_i \in D_i$ and $c$ is any $i$-chain of $C$. Here, $\varphi$ denotes the connection and hence $\varphi(d_1, d_0)$ describes a specific element of the automorphism group acting on the fiber $C$.

Note that $i$-chains of $C$ label the qubit $c$ within each module and that $i$-chains of $D$ label each module $d$. We can ensure that $\mathcal{E}$ respects the connectivity constraints of the architecture if both terms of~\cref{eq:proof2} match the given qubit connectivity.
We do so, by looking at both of the terms separately.
The first term in~\cref{eq:proof2} describes that each qubit $c$ in a module $d_1$ is connected to qubits labeled $\varphi(d_1, d_0)c = c'$ in adjacent modules $d_0 \in \partial^D(d_1)$ $\forall d$. This connectivity requirement is fully satisfied by our new definition of inter-modular connectivity~\cref{def:modules2}. For this exact reason, we consider the base of the fiber bundle to represent the code describing the inter-modular connectivity as we want to \emph{twist} connections only between the modules.
The basis elements of the second term in~\cref{eq:proof2} are pairs of cells $(d,c) \in D \times C$ on which the boundary operator acts as $\partial^v: (d, c) \rightarrow (\mathrm{id}^D(d),\partial^C(c))$ $\forall d, c$. By linear extension this defines a map on $D \otimes C$. It is equally stated that in \textit{each} module $d$, a qubit $c$ is connected to qubits $\partial^C(c)$ $\forall c$. This matches our~\cref{def:qubits} for intra-modular connectivity. 
Therefore, the required qubit connectivity of $\cal{E}$ is given by some additive combination of terms which define the intra- and inter-modular connectivity respectively. 

\end{proof}

\cref{thm:bp-respects-architecture} thus establishes a correspondence between modular architectures (including inter-modular connectivity) and quantum codes that can be described using balanced product construction. While the proof considers balanced product codes that are equivalent to fiber bundle codes, we expect that a wider range of these codes respect the connectivity constraints---depending on the chosen group and group action.

As a simple example, we construct a balanced product code from two classical codes represented by chain complexes $C$ and $D$. The complex $C$ describes the intra-modular connectivity and corresponds to a $d=15$ cyclic repetition code as presented in~\cref{sec:looped-arch}. The code $D$ describes the inter-modular connectivity and was obtained by generating a random sparse parity-check matrix with dimensions $255 \times 450$ and a cyclic symmetry of order $15$. It encodes $\mathrm{dim}(H_1(D)) = 195$ bits.
Since both codes share the cyclic symmetry we take the balanced product over the group $\mathbb{Z}_{15}$. The product can be recast as a fiber bundle code and therefore satisfies~\cref{theorem:hp-distance-special-case}. The resulting code $D\otimes_{\mathbb{Z}_{15}}C$ has $n=705$ qubits and encodes at least $k= \mathrm{dim}(H_1(D/\mathbb{Z}_{15})) = 195/15 = 13$ logical qubits, where the calculation follows from the K\"unneth formula for fiber bundle codes~\cite{hastings2021fiber}. The $X$ and $Z$ parity checks have approximate average weight of $10$ and $6$, respectively. An exhaustive probabilistic distance search with \textit{QDistRnd}~\cite{QDistRnd0.8.5} showed that the code distance is at most $15$, we believe this bound is saturated. Hence, the balanced product code has expected parameters $\llbracket 705, 13, 15 \rrbracket$ and all qubits (including check qubits) can be partitioned into $47$ modules with $30$ qubits each. In comparison, encoding $13$ qubits into rotated surface codes with the same distance require $2925$ data qubits. The parity check matrix used for the code $D$ and parity check matrices $H_X$ and $H_Z$ for the balanced product code can be found at~\cite{githurepo2}.

\section{Conclusion}\label{sec:conclusion}
We have proposed a novel correspondence between two concepts from distinct rapidly evolving domains: QLDPC product code constructions and modular quantum computing architectures. Using tools from homological algebra that have been used in constructions of product codes, we give a novel way to view modular architectures as chain complexes and show that valid quantum codes that respect the given architectural constraints can be constructed from products of such chain complexes. 
Our results constitute an essential further step towards closing the gap between recent theoretical breakthroughs around asymptotically good quantum codes and practical applications of QLDPC codes.
Especially due to the formalisation of the looped pipeline architecture, we show practical relevance of our work.

As a direct extensions of this work it may be possible to generalise the considered constructions by allowing modules to have different inter-modular connectivity. This renders the construction more complex and product constructions as the ones used in this work might a priori not be applicable. Moreover, a further generalisation of our approach could be considered by investigating layouts of modular architectures. That is, by considering layouts of layouts of modules. It may be possible that such constructions can also be described using product constructions as described in this work but we leave an exact formulation open for future work. 
On a more practical note, it would be valuable to find small instances of QLDPC codes that can readily be realized on currently available architectures in order to draw comparisons to recent experimental breakthroughs around surface code realisations. 

In general, many open questions around practical aspects of QLDPC codes remain. An important example is how to do fault-tolerant logic on QLDPC codes. Investigating further areas of their potential and more practically relevant regimes around QLDPC codes is a crucial area of research towards scalable fault-tolerant quantum computing. 

\begin{acknowledgments}
The authors would like to thank Nikolas P. Breuckmann for insightful discussions throughout this project and comments on the first draft. Furthermore, the authors would like to thank Simon Benjamin, Zhenyu Cai, and Michael Fogarty for helpful discussions at the initial stages of this project. 
A.S. acknowledges support by the European Union’s Horizon 2020 research and innovation programme under grant agreement no. 951852 (QLSI).
L.B. acknowledges funding from the European Research Council (ERC) under the European Union’s Horizon 2020 research and innovation program (grant agreement No. 101001318) and this work was part of the Munich Quantum Valley, which is supported by the Bavarian state government with funds from the Hightech Agenda Bayern Plus, and has been supported by the BMWK on the basis of a decision by the German Bundestag through project QuaST, as well as by the BMK, BMDW, and the State of Upper Austria in the frame of the COMET program (managed by the FFG).
This research was funded in part by the UKRI grant number EP/R513295/1. For the purpose of Open Access, the author has applied a CC BY public copyright licence to any Author Accepted Manuscript (AAM) version arising from this submission.
\end{acknowledgments}
\bibliography{main}

\end{document}